\theoremstyle{theorem}
\newtheorem{thm}{Theorem}
\theoremstyle{definition}                                 
\theoremstyle{definition}                           
\theoremstyle{remark}                             
\newcommand{\be}{\begin{eqnarray}}
\newcommand{\ee}{\end{eqnarray}}
\newcommand{\R}{\mathbb{R}}  
\newcommand{\C}{\mathbb{C}} 
\newcommand{\N}{\mathbb{N}} 
\def\eg{{\it e.g. }} 
\def\ie{{\it i.e. }}
\newcommand{\wt}[1]{\widetilde{#1}}
\newcommand{\ceil}[1]{\lceil #1 \rceil}
\def\d{{\rm d}}
\def\eg{{\it e.g.}\ }
\def\ie{{\it i.e.}\ }
\begin{document}

\title{General fractional calculus and Prabhakar's theory}
	
	    \author{Andrea Giusti}
		\address{ 
		Bishop's University,
		Physics $\&$ Astronomy Department, 
		2600 College Street, Sherbrooke, J1M 1Z7,
		QC	Canada}	
 		\email{agiusti@ubishops.ca}

	
    \date{November 15, 2019}

\begin{abstract}
General fractional calculus offers an elegant and self-consistent path toward the generalization of fractional calculus to an enhanced class of kernels. Prabhakar's theory can be thought of, to some extent, as an explicit realization of this scheme achieved by merging the Prabhakar (or, three-parameter Mittag-Leffler) function with the general wisdom of the standard (Riemann-Liouville and Caputo) formulation of fractional calculus. Here I discuss some implications that emerge when attempting to frame Prabhakar's theory within the program of general fractional calculus.
\end{abstract}

\thanks{\textbf{Comm.~Nonlinear~Sci.~Numer.~Simulat. 83 (2020) 105114}, DOI: \href{https://doi.org/10.1016/j.cnsns.2019.105114}{10.1016/j.cnsns.2019.105114}}

    \maketitle

\section{Introduction} \label{sec:intro}
	In the last few years a long-standing discussion on the very notion of fractional derivative has been brought back into the spotlight (see \eg \cite{OM-1, Tarasov-1, Tarasov-2, Giusti-NODY, Garrappa-Bongo, Stynes, Luchko-Hilfer}). Traditionally, fractional calculus has always shown a strict bond with the theory of {\em singular} (Fredholm-)Volterra integro-differential equations. For this reason I believe that such a feature should be preserved in all alternative definitions of fractional derivative and I tend to think that, once an unambiguous and unique notion of fractional derivative will be formulated, this singular behavior will most likely represent a pivotal point of the theory. All that said, the aim of this work is neither to condemn nor to condone alternative views on this matter, nonetheless I am convinced that being bluntly clear on this point can ease the reader in understanding the origin and motivations of the arguments discussed here. 

	The scope of this study is to discuss the very nature of the Prabhakar derivative by framing it in Kochubei's general fractional calculus. This analysis will show that, despite the great generality of its definition, the Prabhakar derivative inherit the same weakly singular behavior of the traditional fractional derivatives (Riemann-Liouville and Caputo), thus consolidating its position within the realm of fractional derivatives.

	This work is organized as follows: first, in Section \ref{sec:vs}, I recall some of the basic traits of Volterra operators and fractional derivatives; in Section \ref{sec:kebab}, I first recall the main features of the Prabhakar fractional calculus, then I discuss the nature of the Prabhakar kernel; in Section \ref{sec:gfc}, I analyze to what extent one can frame Prabhakar's theory within the framework of Kochubei's general fractional calculus, taking profit of a reverse engineering approach; then, in Section \ref{sec:conclusions}, I provide some concluding remarks.

\section{Fractional Derivatives vs. Volterra operators} \label{sec:vs}

	Let us consider the class of {\em causal functions}, \ie $u(t) = 0$ for $t<0$. A (linear) Volterra operators of the convolution type can be written as (see \eg \cite{Gohberg})
	\be \label{volterra}
	\mathcal{V}_{(k)} u(t) = \int _{0} ^{t} k(t - \tau) \, u(\tau) \, \d \tau
	\equiv (k \star u) (t) \, , 
	\ee
	where $\star$ denotes the convolution product in the Laplace sense. 

	As one can easily infer, Eq. \eqref{volterra} plays a key role in fractional calculus. Indeed, following the old-fashioned approach pioneered by Riemann and Liouville \cite{Mainardi-Gorenflo-1997, Mainardi_BOOK10, SKM}, which still represents the gold standard for a mathematically consistent formulation of the theory, one has that the simplest way to introduce the notion of fractional integral is through a minimal (though non-trivial) analytic continuation of Cauchy's formula for repeated integration. Specifically, let $f \in L^1 _{\rm loc} (\R ^+)$, then the $n$th integral of $f(t)$ reduces to
	\be \label{Cauchy}
	\notag
	J ^{n} f(t) &\!:=\!& 
	\int _0 ^t \d \tau _1 \int _0 ^{\tau _1} \d \tau _2 
	\cdots 
	\int _0 ^{\tau _{n-1}} \d \tau _n \, f (\tau_n) =\\
	&\!=\!& \frac{1}{(n-1)!} \int _0 ^t (t - \tau)^{n-1} \, f(\tau) \, \d \tau \, .
	\ee
Recalling that $(n-1)! \equiv \Gamma (n)$ and that Euler's gamma function $\Gamma (z)$ can be represented in terms of an absolutely convergent integral for $\Re (z) > 0$, one can generalize the previous expression. Specifically, this procedure leads to the so called {\em Riemann-Liouville fractional integral}
	\be \label{RL}
	J ^{\alpha} f(t) := (\Phi _\alpha \star f) (t) \, , \qquad t > 0 \, , \,\, \alpha > 0 \, ,  
	\ee
	where $\Phi _\alpha (t)$ denotes the Gel'fand-Shilov distribution, which is such that 
	$$\Phi _\alpha (t) := \frac{t^{\alpha - 1}}{\Gamma (\alpha)} \, ,$$ 
	for $t>0$ and vanishes otherwise. Hence, from \eqref{RL} one infers that the Riemann-Liouville fractional integral is nothing but a Volterra operator equipped with a Gel'fand-Shilov kernel, which turns out to be {\em weakly singular} if $0<\alpha <1$.
 
All that being said, one can introduce the notion of {\em fractional derivative} starting from \eqref{RL} and mimicking the key feature of a ordinary derivatives, namely the fact that they act as the ``left-inverse'' of the ordinary integral. Thus, the minimal requirements for defining a fractional derivative $D^\alpha$ from the Riemann-Liouville integral are: (i) one has to recover ordinary calculus in the formal limit $\alpha \in \R^+ \to \alpha \in \N \cup \{0\}$; (ii) $D^\alpha \, J^\alpha f(t) = f(t)$ for $f \in L^1 _{\rm loc} (\R ^+)$. Of course, this implies that the {\em uniqueness} of the definition of fractional derivative is {\em inevitably lost}. Indeed, it is not hard to see that both
	\be \label{RLC}
	{}^{\rm RL} D^\alpha f(t) := D^m J^{m-\alpha} f(t) \quad \mbox{and} \quad 
	{}^{\rm C} D^\alpha f(t) := J^{m-\alpha} D^m f(t) \, ,
	\ee
	with $m = \ceil{\alpha}$ and $D^m g(t) \equiv g^{(m)} (t)$ denoting the $m$th derivative of $g(t)$, satisfy the conditions mentioned above. The two operators in \eqref{RLC} are respectively known in the literature as the Riemann-Liouville and Caputo-Dzhrbashyan fractional derivatives. Note that, from \eqref{RL} and \eqref{RLC} one can easily conclude that fractional calculus can be thought of as a subclass of the theory of linear Volterra integro-differential operators. The specific distinction between these two frameworks is however justified by the need for a {\em minimal} extension of ordinary calculus to an arbitrary positive real order. Furthermore, \eqref{RLC} clearly highlights the {\em weakly singular} nature of these fractional derivatives, for all $\alpha > 0$. It is, however, worth stressing that the importance of the weakly singular nature of these operators is even more subtle than the simple arguments presented thus far. Indeed, this weakly singular nature lies deep at the heart of the distributional aspects of fractional calculus. To clarify this statement it is worth recalling a lesser known representation of the $n$th derivative of the Dirac delta distribution \cite{Mainardi-Gorenflo-1997, distributions}, \ie 
$$
\Phi _{-n} (t) := \lim _{\alpha \to n} \Phi _{-\alpha} (t) = \delta ^{(n)} (t) \, , \qquad n \in \N \cup \{0\} \, .
$$
Given this result, it is not hard to see that the $n$th derivative of a causal function $f(t)$ can be obtained as 
$$
D^n f(t) = (\Phi _{-n} \star f) (t) \, , 
$$
with $\star$ denoting an appropriate extension of the Laplace convolution to the realm of distributions. It is therefore easy to convince oneself to think of the convolution
\be \label{dist-frac}
D^\alpha f(t) = (\Phi _{-\alpha} \star f) (t)
\ee
as a well suited definition of fractional derivative at the level of generalized functions. This is undoubtedly a correct, though the whole argument hides some shortcomings. Indeed, the singular behavior of $\Phi _{-\alpha} (t)$ makes the requirements on the regularity of $f(t)$ in Eq. \eqref{dist-frac} too restrictive for most practical purposes, since $\Phi _{-\alpha} (t) \notin L^1 _{\rm loc} (\R^+)$. Hence, one is generally required to regularize the divergent integral coming from the convolution. The simplest thing that can be done consist in healing the kernel by ``weakening its singularity''. This is done by replacing $\Phi _{-\alpha}$ in  \eqref{dist-frac} with $\Phi _{m-\alpha} \star\Phi _{-m}$ or $\Phi _{-m} \star \Phi _{m-\alpha}$. This procedure renders the integral kernel weakly singular and it allows one to recover the results in Eq. \eqref{RLC} (see \eg \cite{Mainardi-Gorenflo-1997,SKM} for further details). 

	A very general and rigorous study of all Volterra-like operators that generalize the traditional construct of fractional calculus has been carried out by Kochubei in \cite{Kochubei}. The underling idea is that, assuming $k \in L^1 _{\rm loc} (\R ^+)$, one can always define a generalized fractional integro-differential operator
	\be \label{C-K}
	\mathcal{D} _{(k)} u(t) := \frac{\d}{\d t} \int _0 ^t k(t - \tau) \, u(\tau) \, \d \tau - k(t) \, u(0^+) \, , 
	\qquad \mbox{with} \,\,\, u \in L^1 _{\rm loc} (\R ^+) \, ,
	\ee  
thus extending the standard scheme presented above. Note that we have decided to focus solely on the regularized (Caputo-like) version of the operator $\mathcal{D} _{(k)}$ for the sake of simplicity and also because it entails a wider scientific interest when dealing with physically relevant initial value problems. Now, under some general assumptions (see \cite{Kochubei} for further details), one can find the generalized fractional integration operator associated to \eqref{C-K}, \ie
	\be \label{I-K}
	\mathcal{J} _{(k)} u(t) = \int _0 ^t \varkappa (t- \tau) \, u(\tau) \, \d \tau \, , 
	\ee
by following the prescriptions for the Riemann-Liouville case. The key feature of $\varkappa (t)$ is that it must satisfy the condition $(k \star \varkappa) (t) = 1$, \ie $k(t)$ and $\varkappa (t)$ form {\em Sonine pair} \cite{Kochubei, samko2003integral, Sonine}. Along this line, it was argued that if two locally integrable functions satisfy the Sonine condition, together with some monotonicity conditions near $t=0$, then they must have an integrable singularity at $t = 0$ \cite{SC-1,SC-2,Hanyga}. This further strengthen the connection between fractional derivatives and weakly singular Volterra operators discussed above.

To sum up, this analysis of the literature on fractional calculus seems to suggest that {\em fractional derivatives represent a class of weakly singular Volterra convolution operators} and that this specific feature should be carried along in all generalization of such objects.

\section{On the weakly singular nature of the Prabhakar kernel}\label{sec:kebab}

	Let me recall a few basic definitions of Prabhakar calculus. First, this whole research topic is based on a three parameter generalization of the Mittag-Leffler function, known as the {\em Prabhakar function} \cite{Prabhakar1971, FM-ML, Garrappa-SIAM,GarrappaPopolizio-JSC}, \ie
\be \label{P-function}
E_{\alpha,\beta}^{\gamma}(z) 
= 
\sum_{k=0}^{\infty} \frac{(\gamma)_k \,  z^{k}}{k! \Gamma(\alpha k + \beta)}
, \quad 
\alpha, \beta, \gamma \in \mathbb{C}
, \quad \Re(\alpha) > 0 \, ,
\ee
where $(\gamma)_k$ is the Pochhammer (rising factorial) symbol which can also be rewritten as $(\gamma)_k = \Gamma (\gamma + k) / \Gamma (\gamma)$. The latter turns out to be an {\em entire function} of order $\rho = 1/\Re(\alpha)$ and type $\sigma=1$ (see \eg \cite{FM-ML}), this will indeed be a key point in the following. 

	The Prabhakar kernel \cite{PBK-derivative} is then defined as
\be \label{P-kernel}
e^{\gamma} _{\alpha, \beta}(\lambda; t) = t^{\beta-1}E^{\gamma}_{\alpha,\beta} \left(\lambda t^{\alpha} \right),
\qquad t \in \R ^+, \: \alpha, \beta, \lambda, \gamma \in \mathbb{C}, \: \Re(\alpha),\Re(\beta)>0.
\ee
Following the principles of the theory of {\em general fractional calculus} \cite{Kochubei}, one can define a generalized fractional integral, and the corresponding derivatives, out of the Prabhakar kernel. In detail, let $f \in L^1 (\R ^+)$, then the Prabhakar integral reads \cite{Prabhakar1971,KSS,PBK-derivative}
\be \label{P-integral}
\bm{\mathcal{E}}^{\gamma}_{\alpha,\beta, \lambda}  f(t) 
= (e^{\gamma}_{\alpha,\beta}(\lambda ; \cdot ) \star f) (t) \, , \quad 
\alpha, \beta, \lambda, \gamma \in \mathbb{C}, \: \Re(\alpha),\Re(\beta)>0 \, .
\ee
Then, following the prescriptions discussed above, one can define the regularized Prabhakar derivative \cite{DOP,PBK-derivative} according to
\be \label{P-derivative}
{}^{\rm C}\bm{\mathcal{D}}^{\gamma}_{\alpha, \beta, \lambda} f(t)&=\bm{\mathcal{E}}^{-\gamma}_{\alpha, m-\beta, \lambda} (D^m f) (t) \, , \qquad m:= \ceil{\beta}
\ee
with $f(t)$ denoting real-valued function whose derivatives are continuous up to order $m-1$ on $\R^+$ and such that $D^{m-1} f$ is absolutely continuous. For further details see \eg \cite{Giusti-NODY, PBK-derivative, GG, MG, GC, Garrappa-HN, CGV-Mathematics2018, TomovskiHilferSrivastava2010} and Appendix \ref{sec:appendix}.

	Note that I am solely focusing on $\R^+ = (0, \infty)$ and the regularized version of the Prabhakar derivative, nonetheless the following arguments can be easily extended beyond these restrictions.

	From Eq. \eqref{P-derivative}, it is now easy to identify the integral kernel that defines this operator, namely
\be \label{kernel-derivative}
e^{- \gamma} _{\alpha, m-\beta}(\lambda; t) = t^{m - \beta -1} \, 
E^{-\gamma}_{\alpha,m - \beta} \left(\lambda t^{\alpha} \right) \, ,
\ee
again with $\alpha, \beta, \lambda, \gamma \in \mathbb{C}$, $\Re(\alpha)>0$, and $\Re(\beta)>0$. From Eq. \eqref{kernel-derivative} it is easy to infer that $E^{-\gamma}_{\alpha,m - \beta} \left(\lambda t^{\alpha} \right)$ is always regular on $\R ^+$, whereas $t^{m - \beta -1}$ gives rise of a power-law singularity in $t=0$, for $\beta \notin \N$, of the same type of the standard fractional derivatives in Eq. \eqref{RLC}. This is in full agreement with the discussion in \cite{Hanyga} and brings all potential suspicious  concerning the fractional nature of Prabhakar derivative to a close. Besides, if one tries to investigate the limit $\epsilon \to 0$, with $\epsilon = m - \beta$, it is easy to see that
\be \label{series-k}
\nonumber e^{- \gamma} _{\alpha,\epsilon}(\lambda; t) &\equiv& e^{- \gamma} _{\alpha, m-\beta}(\lambda; t) \\
\nonumber &=& \frac{t^{\epsilon - 1}}{\Gamma (\epsilon)} + \sum _{k=1} ^\infty 
\frac{(-\gamma)_k \, \lambda ^k }{k!}
\frac{t^{\alpha k + \epsilon - 1}}{\Gamma (\alpha k + \epsilon)} \\
\nonumber &=& \Phi_{\epsilon} (t) + \sum _{k=1} ^\infty 
\frac{(-\gamma)_k \, \lambda ^k }{k!}
\Phi_{\alpha k + \epsilon} (t) \\
&\overset{\epsilon \to 0}{\longrightarrow}& \delta (t) + 
\sum _{k=1} ^\infty 
\frac{(-\gamma)_k \, \lambda ^k }{k!}
\, \Phi _{\alpha k} (t) \, ,
\ee
which implies
\be \label{series-d}
\lim _{\beta \to n}{}^{\rm C}\bm{\mathcal{D}}^{\gamma}_{\alpha, \beta, \lambda} f(t)
= D^n f(t) +  \sum _{k=1} ^\infty 
\frac{(-\gamma)_k \, \lambda ^k }{k!}
(J ^{\alpha k} D^n f) (t) \, , \quad n \in \N \, .
\ee
For details on series representations in Prabhakar calculus I invite the reader to refer to \cite{Giusti-NODY}, and to \cite{scemolo} for some extensions. 

From Eqs. \eqref{series-k} and \eqref{series-d} one can conclude that ordinary calculus is obtained via the double limits: a) $\{\beta \to n \, ; \, \gamma \to 0\}$; b) $\{\beta \to n \, ; \, \lambda \to 0\}$. In both cases the two limits commute. Furthermore, focusing just on $\beta \to n$, one can infer that if $0 < \Re (\alpha) < 1$ the series in the right hand side of Eq. \eqref{series-k} contains at least one singular term at $t=0$, whereas for $\Re (\alpha) \geq 1$ all terms in the series are non-singular at $t=0$. In other words, if $\Re (\alpha) \geq 1$ and $\beta \in \N$ then the Prabhakar derivatives reduces to an ordinary derivative and a series of non-singular, linear Volterra operators which are arguably of limited interest for practical purposes \cite{Stynes,Hanyga}.  

\section{Connection with General Fractional Calculus} \label{sec:gfc}

	Let us first recall the rigorous definition of {\em Sonine pairs} \cite{samko2003integral}. Two functions $k(t) , \, \varkappa (t) \in L_{\rm loc} ^1 (\R^+)$ form a Sonine pair if $(k \star \varkappa) (t) = 1$ for almost all $t>0$. If one then restricts the operators in Section \ref{sec:kebab} to the case of 
\be \label{restrictions-gfc}
\alpha, \beta, \lambda, \gamma \in \mathbb{\R} \, , \quad \alpha>0 \, , \quad 0<\beta<1 \, ,
\ee
it is easy to see that 
\be \label{restrictions-kernel}
k(t) = e^{- \gamma} _{\alpha, 1-\beta}(\lambda; t) \, , \quad \varkappa(t) = e^{\gamma}_{\alpha,\beta}(\lambda ; t) \, ,
\ee
form a Sonine pair. Indeed, denoting by
$$ \wt{f}(s) \equiv \mathcal{L} \left[ f(t) \, ; \, s \right] = \int _0 ^\infty e^{-st} \,  f(t) \, \d t $$
the Laplace transform of a function $f(t)$, and recalling that
$$ \mathcal{L} \left[e^{\gamma} _{\alpha, \beta}(\lambda; t) \, ; \, s \right] = s^{-\beta} (1 - \lambda s^{-\alpha})^{-\gamma} \, , $$
one can easily infer that 
$$ \wt{\varkappa} (s) = \frac{1}{s \, \wt{k} (s)} \, , $$
thus verifying that the two functions in \eqref{restrictions-kernel} form a Sonine pair.

	Now, pursuing the general idea in \cite{Kochubei}, it is important to consider the role of relaxation processes in order to understand to what extent this picture can be reconciled with the scheme of general fractional calculus. Specifically, let us assume $k(t)$, appearing in Eq. \eqref{C-K}, to have a well define Laplace transform $\wt{k} (s)$ for all $s>0$, such that $\wt{k}(s)$ is a function of the Stieltjes class and satisfies: (i) $\wt{k}(s) \to \infty$ as $s \to 0$; (ii) $s \,\wt{k}(s) \to 0$ as $s \to 0$; (iii) $\wt{k}(s) \to 0$ as $s \to \infty$; (iv) $s \,\wt{k}(s) \to \infty$ as $s \to \infty$. Then, the Cauchy problem
\be
\mathcal{D} _{(k)} y(t) = - \lambda \, y(t) \, , \quad y(0) = y_0 \, , \qquad \lambda > 0 \, , \,\, t>0 \, ,
\ee
known as the {\em relaxation problem}, admits a unique solution which turns out to be both infinitely differentiable and completely monotonic.

Coming to the Prabhakar derivative, considering the restrictions in \eqref{restrictions-gfc}, one has that Eq. \eqref{P-derivative} reduces to (see \cite{PBK-derivative})
\be \label{eq:P-gfc}
{}^{\rm C}\bm{\mathcal{D}}^{\gamma}_{\alpha, \beta, \lambda} f(t)=
\bm{\mathcal{E}}^{-\gamma}_{\alpha, 1-\beta, \lambda}  f' (t) =
\frac{\d}{\d t} \int _0 ^t k(t - \tau) \, f(\tau) \, \d \tau 
- k(t) \, f(0^+)
\ee
with $k(t) = e^{- \gamma} _{\alpha, 1-\beta}(\lambda; t)$, as in \eqref{restrictions-kernel}. Since the derivative in Eq. \eqref{eq:P-gfc} has a form compatible to the one in \cite{Kochubei}, one can now wonder about the nature of the solutions of 
the Cauchy problem
\be \label{cauchy-P}
{}^{\rm C}\bm{\mathcal{D}}^{\gamma}_{\alpha, \beta, \lambda} y(t) = - \xi \, y(t) \, , 
\quad y(0) = y_0 \, , \qquad \xi > 0 \, , \,\, t>0 \, .
\ee
Note that the explicit solution of this problem, obtained through the Laplace transform method, is discussed in \cite{GC}, and further analyzed in \cite{ZhaoSun, ZhaoSun-note}.

Focusing on the kernel function $k(t)$ in Eq. \eqref{restrictions-kernel}, it is easy to see that its Laplace transform, \ie
\be \label{kernel-P-k} 
\wt{k} (s) = s^{\beta-1} (1 - \lambda s^{-\alpha})^{\gamma} \, , 
\ee
satisfies the conditions (i)-(iv) if 
\be \label{ineq-1}
0<\beta<1 \, , \qquad - \alpha \gamma < 1 -\beta < 1 - \alpha \gamma \, ,
\ee
however, showing in which settings \eqref{kernel-P-k} is a Stieltjes function turns out to be a rather hard task in general.
Alternatively, one could try to deduce for which values of the parameters $(\alpha, \beta, \gamma, \lambda)$ the function \eqref{kernel-P-k} turns out to be of the Stieltjes class, by means of a reverse engineering approach, even though this procedure will not lead to a general result on the connection between Prabhakar's theory and Kochubei's general fractional calculus. Hence, recalling that a Stieltjes function can be thought of as the Laplace transform of a completely monotonic function \cite{Stieltjes}, then one can take profit of the results in \cite{MG} to provide some constraints on the parameters in \eqref{restrictions-gfc} so that the solution of \eqref{cauchy-P} turns out to be completely monotonic.\footnote{It is worth mentioning that the complete monotonicity of the Prabhakar function has originally been investigated in \cite{hanyga-jst}, within the framework of anisotropic dielectric relaxation.} In detail, recalling that the function $e^{\eta} _{\nu, \sigma}(-\mu; t)$, with $\mu >0$, is locally integrable and completely monotonic if (see \eg \cite{MG})
\be
0 < \nu \leq 1 \, , \quad 0 < \nu \eta \leq \sigma \leq 1 \, , 
\ee
then $\wt{k} (s)$ in \eqref{kernel-P-k} will surely be a Stieltjes function if  
\be \label{ineq-2}
\lambda < 0 \, , \quad 0 < \alpha \leq 1 \, , \quad 0 < - \alpha \gamma \leq 1 -\beta \leq 1 \, .
\ee
Putting together \eqref{ineq-1} and \eqref{ineq-2}, one finds
\be \label{final-restrictions}
\lambda < 0 \, , \quad 
0 < \alpha \leq 1 \, , 
\quad \gamma < 0 \, , \quad 0 < \beta < 1 \, , \quad - \alpha \gamma \leq 1 -\beta \leq 1 \, .
\ee
Hence, considering the Prabhakar derivative \eqref{eq:P-gfc} with parameters as in \eqref{final-restrictions}, then the Cauchy problem \eqref{cauchy-P} admits a unique infinitely differentiable and completely monotonic solution. Note that, some results concerning the connection of Prabhakar's theory with Kochubei's general fractional calculus have been analyzed in \cite{ZhaoSun-note} for the specific set of parameters employed in \cite{ZhaoSun}.

	To sum up, one can conclude that, for some values of the parameters, Prabhakar's theory can be related to Kochubei's general fractional calculus. However, this does not represent a general result for the full theory since it is not guaranteed that all the allowed configurations of the parameters in \eqref{restrictions-gfc} will fit this general scheme.

\section{Conclusions}\label{sec:conclusions}

After reviewing the general ideas of Kochubei's general fractional calculus in light of some recent results \cite{Hanyga} concerning integro-differential operators with weakly singular kernels, I have discussed to what extent Prabhakar calculus can be framed within this general scheme. To this aim I have also investigated the behavior of the Prabhakar kernel \eqref{kernel-derivative}. It turns out, as one would probably expect from \cite{SC-1,SC-2,Hanyga}, that if $\Re (\beta) >0$ and $\beta \notin \N$ then \eqref{kernel-derivative} is always weakly singular. On the other hand, if we take the integer limit for $\beta$, the Prabhakar kernel \eqref{kernel-derivative} leaves $L ^1 _{\rm loc} (\R ^+)$, entering the realm of distributions. This implies that, in this limit, the Prabhakar derivative can be split into an ordinary derivative and a series of linear Volterra-like integro-differential operators, with either regular ($\Re(\alpha) \geq 1$) or (at least in part) singular ($0<\Re(\alpha) < 1$) kernels. This result, in all honesty, makes this limit of minor interest for fractional calculus. Finally, I have discussed some generalities of relaxation processes involving Prabhakar derivatives taking profit of the tools and general arguments in \cite{Kochubei, MG} and building further on the results in \cite{ZhaoSun-note}.

\section*{Acknowledgments}	
The author is grateful to Y. Luchko and A. Hanyga for useful comments. This work is supported, in part, by the Natural Science and Engineering Research Council of Canada (Grant No. 2016-03803 to V. Faraoni) and by Bishop's University. Furthermore, this study was carried out in the framework of the activities of the National Group of Mathematical Physics (GNFM, INdAM).

\appendix

\section{On the Prabhakar derivative and its regularization} \label{sec:appendix}
As mentioned in Section \ref{sec:kebab}, in \cite{Prabhakar1971} T. Prabhakar introduced the notion of three-parameter Mittag-Leffler function, often referred to as Prabhakar function, as a kernel for a class of singular integral equations. These equations are defined through an integral operator which takes the form in \eqref{P-integral}. Such an operator was later named after the author and is currently known as the Prabhakar fractional integral. An extensive study of this operator was then carried out by many authors (see \eg \cite{Giusti-NODY, PBK-derivative, GG, MG, GC, Garrappa-HN, CGV-Mathematics2018, TomovskiHilferSrivastava2010,KSS}) several decades after its original formulation. 

The first proposal of a Riemann-Liouville type definition of fractional derivative based on the Prabhakar fractional integral was formulated by A. A. Kilbas, M. Saigo, and R. K. Saxena in \cite{KSS} and reads
\be \label{P-derivative-RL}
\bm{\mathcal{D}}^{\gamma}_{\alpha, \beta, \lambda} f(t)&=D^m \bm{\mathcal{E}}^{-\gamma}_{\alpha, m-\beta, \lambda} f (t) \, , \qquad m:= \ceil{\beta} \, ,
\ee 
with all the appropriate assumptions on the regularity of $f(t)$ (see \cite{KSS,PBK-derivative}) and with $(\alpha,\beta,\gamma, \lambda)$ as in Section \ref{sec:kebab}. Later on, M. D'Ovidio and F. Polito named this operator after T. Prabhakar in \cite{DOP}, giving birth to the (well-established) terminology used in the current literature, and introduced the notion of regularized Prabhakar derivative, defined as in \eqref{P-derivative}. These notions were then collected and further built upon by R. Garra, {\em et al.} in \cite{PBK-derivative}, kick-starting the current research line on Prabhakar's theory.

The Prabhakar derivative \eqref{P-derivative-RL}, introduced by A. A. Kilbas, M. Saigo, and R. K. Saxena, was therefore the first proposal for a left-inverse operator of the Prabhakar fractional integral \eqref{P-integral}. It is now worth pointing out that the regularized Prabhakar derivative \eqref{P-derivative} does act as a left-inverse of the Prabhakar fractional integral \eqref{P-integral}. To prove this statement I need to recall a few important properties of both the Prabhakar fractional integral and derivatives.

First, let $\alpha, \beta, \lambda, \gamma \in \mathbb{R}$ and $\alpha, \beta>0$. Furthermore, let $f(t)$ be a real-valued function whose derivatives are continuous up to order $m-1$ on $[0,b]$ and such that $D^{m-1} f$ is absolutely continuous on $[0,b]$, with $0 < t < b \leq \infty$. Then,
\be \label{cond-1}
{}^{\rm C}\bm{\mathcal{D}}^{\gamma}_{\alpha, \beta, \lambda} f(t) =
\bm{\mathcal{D}}^{\gamma}_{\alpha, \beta, \lambda} \Bigg[ f(t) - \sum _{k=0} ^{m-1} \frac{t^k}{k!} \, f^{(k)}(0^+)
 \Bigg] 
\, ,
\ee
see \cite{PBK-derivative} for details.

Secondly, assume $\alpha, \beta, \lambda, \gamma \in \mathbb{R}$, with $\alpha, \beta>0$, and $\beta > k$ with $k \in \N$. If $f \in C[0,b]$, then
\be \label{cond-2}
D^k \bm{\mathcal{E}}^{\gamma}_{\alpha,\beta, \lambda} f(t) =
\bm{\mathcal{E}}^{\gamma}_{\alpha,\beta - k, \lambda} f(t) \, ,
\ee
see \cite{KSS} for details.

One can then prove the following 
\begin{thm}
Let $\alpha, \beta, \lambda, \gamma \in \mathbb{R}$ and $\alpha, \beta>0$. If $f \in C[0,b]$, then the regularized Prabhakar derivative \eqref{P-derivative} is a left-inverse of the Prabhakar fractional integral \eqref{P-integral}.
\end{thm}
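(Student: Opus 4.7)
The plan is to let $g(t) := \bm{\mathcal{E}}^{\gamma}_{\alpha,\beta, \lambda} f(t)$ and compute ${}^{\rm C}\bm{\mathcal{D}}^{\gamma}_{\alpha, \beta, \lambda} g(t)$ by reducing it first to the Kilbas--Saigo--Saxena derivative \eqref{P-derivative-RL} via \eqref{cond-1}, and then to the identity by exploiting the semigroup property of the Prabhakar integral. The first step is to check that the Taylor remainder inside the bracket in \eqref{cond-1} vanishes, i.e.\ that $g^{(k)}(0^+) = 0$ for $k = 0, 1, \ldots, m-1$. By \eqref{cond-2}, which applies since $\beta - k > 0$ for every such $k$ (as $m = \ceil{\beta}$), one has $g^{(k)}(t) = \bm{\mathcal{E}}^{\gamma}_{\alpha, \beta - k, \lambda} f(t)$; the kernel $e^{\gamma}_{\alpha,\beta-k}(\lambda;\cdot)$ is locally integrable on $\R^+$ and $f$ is continuous on $[0,b]$, so a term-by-term estimate through the series expansion $e^{\gamma}_{\alpha,\beta-k}(\lambda;t) = \sum_{n\geq 0} (\gamma)_n \lambda^n/n!\, \Phi_{\alpha n + \beta-k}(t)$ gives $|g^{(k)}(t)| = O(t^{\beta-k})$ as $t\to 0^+$, which vanishes.

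With those boundary terms gone, \eqref{cond-1} collapses the left-hand side to the Riemann--Liouville type Prabhakar derivative, namely
\begin{equation*}
{}^{\rm C}\bm{\mathcal{D}}^{\gamma}_{\alpha, \beta, \lambda} g(t)
= \bm{\mathcal{D}}^{\gamma}_{\alpha, \beta, \lambda} g(t)
= D^m \, \bm{\mathcal{E}}^{-\gamma}_{\alpha, m-\beta, \lambda} \, \bm{\mathcal{E}}^{\gamma}_{\alpha,\beta, \lambda} f(t).
\end{equation*}
The second step is then to evaluate the composition of the two Prabhakar integrals. This is handled by the standard convolution identity $e^{\gamma_1}_{\alpha,\beta_1}(\lambda;\cdot) \star e^{\gamma_2}_{\alpha,\beta_2}(\lambda;\cdot) = e^{\gamma_1+\gamma_2}_{\alpha,\beta_1+\beta_2}(\lambda;\cdot)$, whose cleanest verification is through Laplace transforms using $\mathcal{L}[e^{\gamma}_{\alpha,\beta}(\lambda;t);s] = s^{-\beta}(1-\lambda s^{-\alpha})^{-\gamma}$, after which the factors $(1-\lambda s^{-\alpha})^{\pm\gamma}$ cancel. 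Applied with parameters $(-\gamma, m-\beta)$ and $(\gamma,\beta)$, the composition yields $\bm{\mathcal{E}}^{0}_{\alpha, m, \lambda} f(t)$, whose kernel is simply the Gel'fand--Shilov $\Phi_m$, so the composition is exactly the $m$-fold ordinary integral $J^m f(t)$.

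Combining the two steps, ${}^{\rm C}\bm{\mathcal{D}}^{\gamma}_{\alpha, \beta, \lambda} \bm{\mathcal{E}}^{\gamma}_{\alpha,\beta, \lambda} f(t) = D^m J^m f(t) = f(t)$ by Cauchy's repeated-integral formula \eqref{Cauchy}, which is permissible since $f \in C[0,b]$. The main obstacle I expect is a clean justification of the vanishing of $g^{(k)}(0^+)$: one needs the term-by-term differentiation under the convolution and the interchange of limit and series, which is where local integrability of the kernel and the entire-function bound on $E^{\gamma}_{\alpha,\beta-k}$ (order $1/\Re(\alpha)$, type $1$) are essential. The algebraic composition step is by contrast essentially a bookkeeping computation once the Laplace-transform dictionary is invoked.
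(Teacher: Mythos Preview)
Your proof is correct and follows the same route as the paper's: reduce to the Riemann--Liouville-type operator \eqref{P-derivative-RL} via \eqref{cond-1} after showing, through \eqref{cond-2} and the $O(t^{\beta-k})$ bound, that the boundary terms $g^{(k)}(0^+)$ vanish (the paper's integral identity $\int_0^x t^{\beta-1}E^{\gamma}_{\alpha,\beta}(\lambda t^\alpha)\,\d t = x^{\beta}E^{\gamma}_{\alpha,\beta+1}(\lambda x^\alpha)$ is exactly the closed form of your term-by-term series estimate). The only difference is that the paper then quotes the Kilbas--Saigo--Saxena left-inverse property of \eqref{P-derivative-RL} as a black box, whereas you unpack it via the convolution semigroup identity $e^{-\gamma}_{\alpha,m-\beta}(\lambda;\cdot)\star e^{\gamma}_{\alpha,\beta}(\lambda;\cdot)=\Phi_m$ and $D^m J^m f=f$, making your version slightly more self-contained but otherwise identical in structure.
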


\begin{proof}
From \eqref{cond-1} one finds
$$
{}^{\rm C}\bm{\mathcal{D}}^{\gamma}_{\alpha, \beta, \lambda} \bm{\mathcal{E}}^{\gamma}_{\alpha,\beta, \lambda} f(t) =
\bm{\mathcal{D}}^{\gamma}_{\alpha, \beta, \lambda} \Bigg[ 
\bm{\mathcal{E}}^{\gamma}_{\alpha,\beta, \lambda} f(t)
- \sum _{k=0} ^{m-1} \frac{t^k}{k!} \, \big(D^k \bm{\mathcal{E}}^{\gamma}_{\alpha,\beta, \lambda} f\big)(0^+)
 \Bigg]\, .
$$
Because of \eqref{cond-2} one also has that \cite{KSS}
$$
\big(D^k \bm{\mathcal{E}}^{\gamma}_{\alpha,\beta, \lambda} f\big)(0^+) = 
\big(\bm{\mathcal{E}}^{\gamma}_{\alpha,\beta - k, \lambda} f\big)(0^+) \, .
$$
Since $f(t)$ is continuous, for $t \in B_{\delta} \subset [0,b]$, with $B_{\delta} = (0, \delta)$ and $\delta>0$, one finds
\begin{eqnarray*}
0 \leq
\Big| \bm{\mathcal{E}}^{\gamma}_{\alpha,\beta - k, \lambda} f (t) \Big| &\!\!=\!\!& 
\Bigg| \int_0 ^t (t-\tau)^{\beta - k - 1} E^{\gamma} _{\alpha, \beta - k} \Big( \lambda (t - \tau)^\alpha \Big) \, f(\tau) \, \d \tau \Bigg| \\ 
&\!\!\leq\!\!& {\rm ess \, sup}_{t \in B_{\delta}} \big| f (t) \big| \, 
t^{\beta - k} \, E^{\gamma} _{\alpha, \beta - k + 1} ( \lambda t^\alpha ) \\
&\overset{t \to 0^+}{\longrightarrow}& 0 \, ,
\end{eqnarray*}
taking profit of the property
$$
\int _0 ^x t^{\beta - 1} \, E^{\gamma} _{\alpha, \beta} ( \lambda t^\alpha ) \, \d t = 
x^{\beta} \, E^{\gamma} _{\alpha, \beta + 1} ( \lambda x^\alpha ) \, ,
$$
with $\alpha, \beta, \gamma, \lambda \in \C$ and $\Re(\alpha) , \Re(\beta) > 0$, see \eg \cite{KSS}.

Then,
\begin{eqnarray*}
{}^{\rm C}\bm{\mathcal{D}}^{\gamma}_{\alpha, \beta, \lambda} \bm{\mathcal{E}}^{\gamma}_{\alpha,\beta, \lambda} f(t) &\!\!=\!\!&
\bm{\mathcal{D}}^{\gamma}_{\alpha, \beta, \lambda} \Bigg[ 
\bm{\mathcal{E}}^{\gamma}_{\alpha,\beta, \lambda} f(t)
- \sum _{k=0} ^{m-1} \frac{t^k}{k!} \, \big(D^k \bm{\mathcal{E}}^{\gamma}_{\alpha,\beta, \lambda} f\big)(0^+)
 \Bigg]\\
&\!\!=\!\!&
\bm{\mathcal{D}}^{\gamma}_{\alpha, \beta, \lambda} \Bigg[ 
\bm{\mathcal{E}}^{\gamma}_{\alpha,\beta, \lambda} f(t)
- \sum _{k=0} ^{m-1} \frac{t^k}{k!} \, \big(\bm{\mathcal{E}}^{\gamma}_{\alpha,\beta-k, \lambda} f\big)(0^+)
 \Bigg]\\
&\!\!=\!\!&
\bm{\mathcal{D}}^{\gamma}_{\alpha, \beta, \lambda} 
\bm{\mathcal{E}}^{\gamma}_{\alpha,\beta, \lambda} f(t) = f(t) \, ,
\end{eqnarray*}
where in the last step I have used the fact that the Prabhakar derivative \eqref{P-derivative-RL} is the left-inverse of the Prabhakar integral \eqref{P-integral} (see \cite{KSS}), thus concluding the proof.
\end{proof}

%
%
%
%
%
%
%
%
%

\end{document}